\documentclass{article}
\usepackage{ismir,cite}
\usepackage{aliascnt}
\usepackage{tabu}
\usepackage{amssymb,amsmath,mathtools,amsthm}
\usepackage{booktabs}
\usepackage{multirow}
\usepackage[english]{babel}
\newtheorem{proposition}{Proposition}
\usepackage{epstopdf}
\usepackage{gensymb}
\usepackage{siunitx}
\sisetup{detect-weight=true, detect-shape=true, detect-mode=true}
\newcommand{\W}{\mathbf{W}}
\newcommand{\D}{\mathbf{D}}

\usepackage[draft]{todonotes}   

\usepackage{times}
\usepackage{url}
\usepackage{color}
\usepackage{multirow}
\usepackage{multicol}
\usepackage{graphicx}
\usepackage{caption}
\usepackage{tikz}
\usetikzlibrary{calc,tikzmark}
\usepackage[shortcuts]{extdash}
\usepackage{blindtext}
\usepackage[ruled,norelsize,linesnumbered]{algorithm2e}

\newcommand{\comment}[1]{\textcolor{blue}{}}
\newcommand{\mg}[1]{\textcolor{red}{}}

\makeatletter
\def\inmod#1{\allowbreak\mkern5mu{\operator@font mod}\,\,#1}
\makeatother

\urldef{\discovery}\url{http://www.music-ir.org/mirex/wiki/2017:Discovery_of_Repeated_Themes_%26_Sections}


%
\title{Learning Complex Basis Functions for Invariant Representations of Audio}
\multauthor
{Stefan Lattner,$^{1}$ Monika D{\"o}rfler,$^2$ Andreas Arzt$^{3}$\vspace{1mm} }
{$^1$ Sony Computer Science Laboratories (CSL), Paris, France\\
$^2$ NuHAG, Faculty of Mathematics, University Vienna\\
$^3$ Institute of Computational Perception, JKU Linz\\
}

\begin{document}
\sloppy
\maketitle

\begin{abstract}
Learning features from data has shown to be more successful than using hand-crafted features for many machine learning tasks. In music information retrieval (MIR), features learned from windowed spectrograms are highly variant to transformations like transposition or time-shift. Such variances are undesirable when they are irrelevant for the respective MIR task. We propose an architecture called Complex Autoencoder (CAE) which learns features invariant to orthogonal transformations. Mapping signals onto complex basis functions learned by the CAE results in a transformation-invariant ``magnitude space'' and a transformation-variant ``phase space''. The phase space is useful to infer transformations between data pairs. When exploiting the invariance-property of the magnitude space, we achieve state-of-the-art results in audio-to-score alignment and repeated section discovery for audio.
A PyTorch implementation of the CAE, including the repeated section discovery method, is available online.\footnote{\url{https://github.com/SonyCSLParis/cae-invar}}
\end{abstract}

\section{Introduction}\label{sec:introduction}
Learning from audio data most commonly involves some prior processing of the raw sound signals.
The most popular features are derived from a spectrogram, which consists of the magnitude values of the Fourier transform of a windowed signal of interest. 
In a Fourier transform, a signal is projected onto sine and cosine functions of different frequencies.
One of the main reasons for the spectrogram to be more useful than the usage of the Fourier coefficients in their complex form is the fact that the magnitude spectrum of a signal is invariant to a translation of the original signal. 

This invariance to translation, desirable for most learning problems in audio,  results from the fact that cosine and sine represent  the real and imaginary parts,  respectively, of the \emph{complex eigenvectors} of  translation.
More generally, the eigenvectors of an orthogonal transformation (e.g., translation, rotation, reflection, but most general all permutations - ``shuffling pixels'') constitute an orthonormal basis of complex vectors with corresponding eigenvalues of magnitude $1$. Hence, as we shall see in detail,  the absolute value of a signal's coefficients with respect to this basis is invariant to that transformation.
We harness this invariance property for learning representations invariant to different orthogonal transformations.

In particular, transposition-invariance is an essential property for several MIR tasks, including alignment tasks, repeated section discovery, classification tasks, cover song detection, query by humming, or representations of acapella recordings with pitch drift.
Different methods have aimed at learning transposition-invariant representations.
For example, in \cite{walters2012intervalgram} close time steps in chromagrams are cross-correlated in order to calculate distances between pitch classes, and in \cite{lattner2018learning}, successive n-grams of constant-Q transformed (CQT) representations of audio are compared using a Gated Autoencoder (GAE) architecture. 
Most similar to our approach, the transposition-invariant magnitudes of Fourier transformations applied to chromagram-like representations of audio are facilitated in \cite{DBLP:conf/ismir/Bertin-MahieuxE12} and \cite{marolt2008mid}.
However, instead of using 2D Fourier transforms with fixed basis functions, we learn the relevant basis functions starting from CQT representations of audio.
This learning of basis functions has some advantages over using pre-defined bases.
For example, the number of basis elements necessary to discriminate between signals can be reduced compared to a common Fourier transform (e.g., for transposition- and time-shift invariance we use $M=256$ basis functions for input dimensionality $N=3840$, while usually $N=M$).
Furthermore, our approach is generic and has the potential to learn other musically interesting invariances (e.g., towards tempo-change, diatonic transposition, inversion, or retrograde).

The contribution of this paper is a simple training method for learning invariant representations from data pairs and its application to two MIR tasks.
First, we show that when using the features learned by the Complex Autoencoder (CAE) from audio in CQT representation, we can improve the state-of-the-art in a transposition-invariant repeated section discovery task in audio.
Second, the CAE features prove useful in an audio-to-score alignment task, where we show that most of the time, they yield better results than Chroma features and features calculated with a GAE.
We also compare the CAE with a GAE in classifying rotated MNIST digits, based on rotation-invariant features learned by the CAE.
The reason we also perform experiments on MNIST is that it allows us to show the efficacy of the model with respect to rotation-invariance.
Furthermore, the class labels available in the MNIST dataset help to highlight the different clusters in the rotation-invariant space (see Figure \ref{fig:pca}).

In particular for translations, the model can be interpreted as measuring distances in the data.
Training the CAE for transposition and time-shift invariance on short windows of audio in CQT representation, therefore, leads to representations of rhythmic structures and tonal relationships present in the windows, what we exploit in the repeated section discovery task.
Representing rhythmic structures is less critical in music alignment tasks; it can even be disadvantageous when the aligned signals differ in tempo.
We show in the alignment task that it is sufficient to train the CAE only for transposition-invariance (i.e., time-shift transformation) on rather short n-grams of audio in CQT representation.
This is because compared to repeated section discovery, where rhythmic patterns can help to identify similar parts, in the alignment task, a dynamic time-warping algorithm keeps track of the respective positions in the music pieces.

The CAE can be trained in an unsupervised manner on data pairs obeying the relevant transformations.
Thereby, we obtain a ``magnitude space'' and a ``phase space'', as it is known from a Fourier transform.
The ``magnitude space'' of the CAE is invariant to all the learned transformations.
Remarkably, the \emph{phase shifts} a projected signal undergoes during a transformation (i.e., the relative vector in the ``phase space'' of the CAE) are discriminative with respect to the type and the distance of a transformation.
This is an interesting property which could be exploited for determining types of relations between musical fragments in structure analysis tasks.

The paper is structured as follows. In Section \ref{sec:rel-work} existing work related to the proposed method is discussed. In Section \ref{sec:model-maths} we describe the model and its mathematical background and Section \ref{sec:training} describes the general training procedure. In Section \ref{sec:experiments}, we show results on three different tasks: discovery of repeated themes and sections, audio-to-score alignment, and classification of MNIST digits. We end the paper with a conclusion and a discussion of possible directions for future work (Section \ref{sec:conclusion}).

%


\section{Related work}\label{sec:rel-work}
Generally, mid-level representations in neural networks are highly variant to transformations in the input.
The most common and well-known way to obtain shift-invariance in convolutional architectures is max-pooling \cite{DBLP:conf/icml/BoureauPL10}.
However, full shift-invariance can only be achieved step-wise by applying max-pooling over several layers.
A whole line of research therefore aims to obtain representations invariant to different kinds of transformations using other approaches.
Inspiration for the proposed model was drawn from \cite{memisevic2013aperture}, where complex basis functions are learned using a GAE.
An approach similar to ours is to facilitate harmonic functions or wavelets, either in weight initialization \cite{DBLP:conf/icdar/ZhongJX15, DBLP:conf/iccv/OuyangW13}, for modulating learned filters using Gabor functions \cite{GaborConvNetworks}, or for using fixed wavelets in scattering transforms \cite{DBLP:conf/cvpr/SifreM13, DBLP:journals/pami/BrunaM13}.
Similarly, harmonic functions can be pre-defined, e.g., to obtain rotation invariance in convolutional architectures \cite{DBLP:conf/cvpr/WorrallGTB17}, or learned, e.g., by assuming ``temporal slowness'' of features in videos \cite{DBLP:conf/cvpr/LeZYN11, olshausen2007bilinear}, while pitch-invariant  timbral  features are learned in \cite{Pons18} by enabling convolution  through the  frequency domain.

Most of the approaches mentioned so far (including our approach) aim at invariances to relatively simple, affine transformations.
Invariances to more complex, non-linear transformations are usually achieved by redundancy (e.g., an object is presented from different camera angles or under different lighting conditions), which typically requires bigger architectures.
That way, invariance can be learned by an explicit transformation of the input \cite{jaderberg2015spatial}, by enforcing similarity in the latent space \cite{DBLP:conf/iros/MatsuoFS17}, or by using a Siamese architecture and pre-defined transformation sets \cite{TiPooling}.
Other methods involve rotating convolution kernels during training \cite{DBLP:conf/cvpr/ZhouYQJ17} and dealing with input deformations using learned, dynamic convolution grids \cite{DBLP:conf/iccv/DaiQXLZHW17}. In  \cite{Diele14} an end-to-end CNN which acts on raw audio learns Gabor-like filters similar to those extracted by the CAE, see Figure~\ref{fig:basis_vectors}.

\section{Model and mathematical background}\label{sec:model-maths}
\begin{figure}
\centering
\includegraphics[width=.8\linewidth]{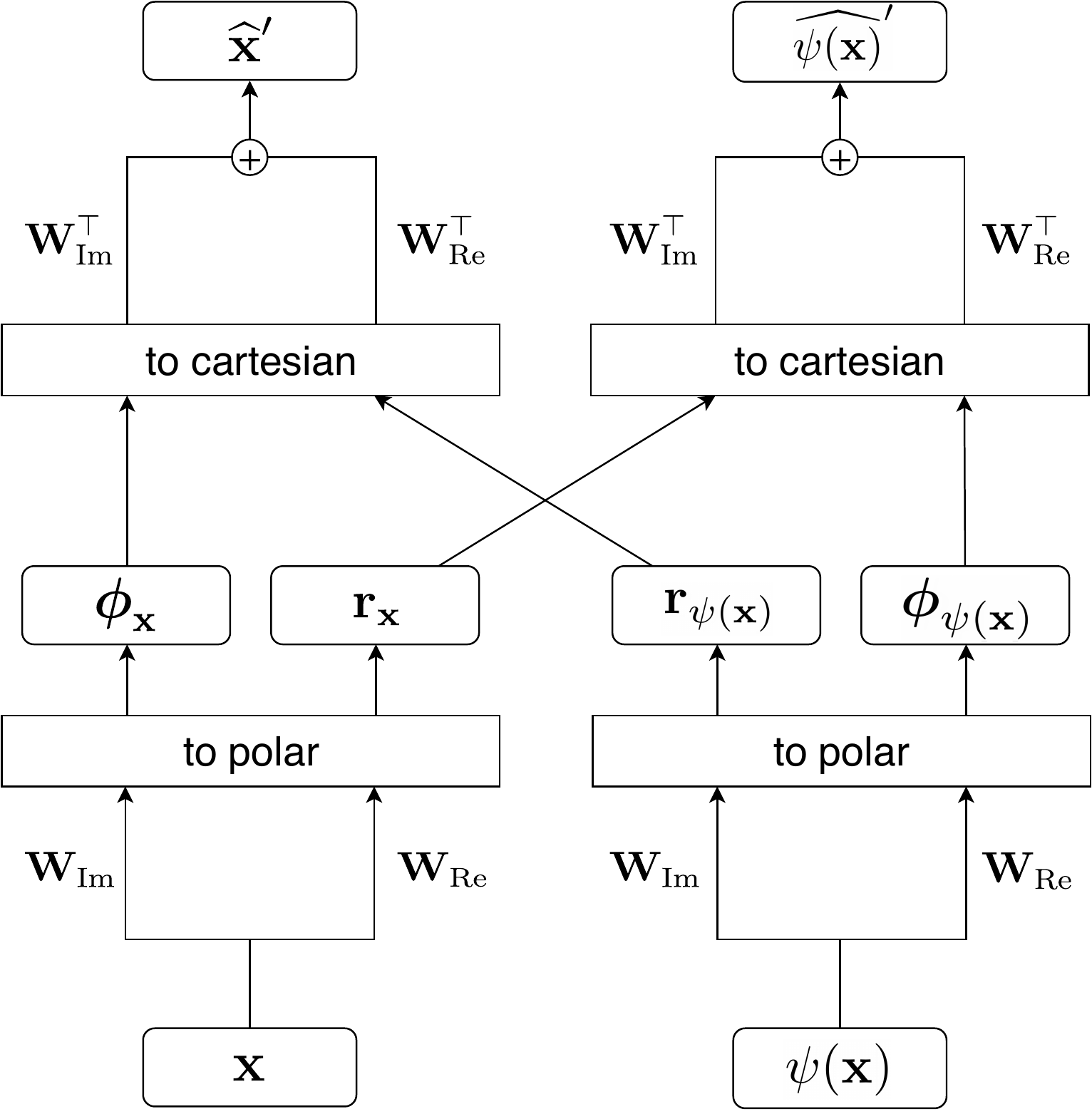} \\
\caption{Schematic illustration of reconstruction during training. Both the input $\mathbf{x}$ and its transformed counterpart $\psi(\mathbf{x})$ are projected onto complex basis pairs $\{\mathbf{W}_{\text{Re}}, \mathbf{W}_{\text{Im}}\}$ and expressed in polar form. Then, using the swapped magnitude vectors $\{\mathbf{r}_{\psi (\mathbf{x})}, \mathbf{r}_{\mathbf{x}}\}$ and the original phase vectors $\{\boldsymbol{\phi}_{\mathbf{x}}, \boldsymbol{\phi}_{\psi (\mathbf{x})}\}$, the data is reconstructed by performing the inverse operations.}
\label{fig:model}
\vspace{-4mm}
\end{figure}

\begin{figure*}
\centering
\includegraphics[width=0.9\linewidth]{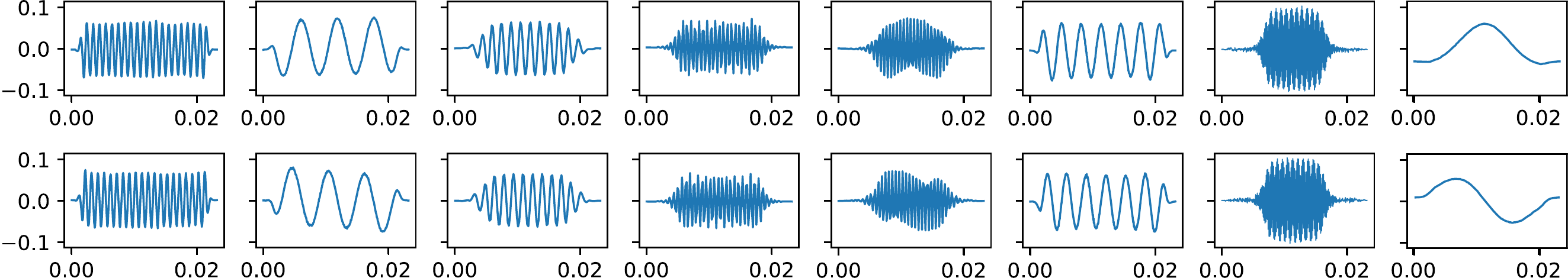} \\
\caption{Some examples of real (top) and imaginary (bottom) basis vectors learned from audio signals (time in seconds).}
\label{fig:basis_vectors}
\end{figure*}

We aim at learning orthogonal transformations  encoding certain invariances of a class of signals which are known or assumed to be useful for a particular learning task at hand. To this end, we leverage the particular properties of orthogonal transformations, which we now describe.
A transformation $\psi : \mathbb{R}^N \rightarrow \mathbb{R}^N$ is orthogonal if 
$\langle \psi (\mathbf{x}), \psi (\mathbf{y})\rangle = \langle \mathbf{x}, \mathbf{y} \rangle$ for all $\mathbf{x},\mathbf{y}\in\mathbb{R}^N$. By $\langle \mathbf{x},\mathbf{y}\rangle$  we denote the inner product on $\mathbb{R}^N$ or $\mathbb{C}^N$, respectively.
Orthogonal transformations are distinguished by the fact that they possess a diagonalization
with eigenvalues which all have absolute value $1$. Hence, in any non-trivial case, the eigenvalues are complex and so are the corresponding eigenvectors.  More precisely, if $\psi$ is orthogonal, there exists a unitary
matrix\footnote{A complex matrix  $\W$ is unitary, if $\W^\ast = \W^ {-1}$.} $\W$ and eigenvalues $\lambda_j$, $j = 1, \ldots , N$, with $|\lambda_j | = 1$
for all $j$, such that 
\begin{equation}\label{Eq:orth}
\psi (\mathbf{x}) = \W^\ast \D\W \mathbf{x}
\end{equation}
Here $\D$ denotes the diagonal $N\times N$ matrix with the eigenvalues $\lambda_j$ in the diagonal. We hence have the following statement. 
\begin{proposition}\label{Prop1}
If an orthogonal transformation  $\psi : \mathbb{R}^N \rightarrow \mathbb{R}^N$ is diagonalised  by a unitary matrix $\W$, then the feature vector given by $|\W \mathbf{x}| $ for all $\mathbf{x}\in  \mathbb{R}^N$ is invariant to $\psi$. In other words, we have $|\W \mathbf{x} |  = |\W \psi (\mathbf{x})| $ for all $\mathbf{x}\in  \mathbb{R}^N$.
\end{proposition}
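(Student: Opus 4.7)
The plan is to chain together two facts: the explicit diagonalisation formula \eqref{Eq:orth} for $\psi$, and the elementary observation that a diagonal unitary matrix leaves the entrywise modulus of any vector unchanged. Everything reduces to applying the unitary property $\W^\ast \W = I$ and the constraint $|\lambda_j|=1$.

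Concretely, I would start from \eqref{Eq:orth} and compute $\W\psi(\mathbf{x}) = \W\W^\ast \D \W \mathbf{x} = \D\W\mathbf{x}$, using unitarity of $\W$. Writing $\mathbf{y} = \W\mathbf{x} \in \mathbb{C}^N$ with components $y_j$, the $j$-th component of $\D\mathbf{y}$ is $\lambda_j y_j$, so taking moduli entrywise gives $|\lambda_j y_j| = |\lambda_j|\,|y_j| = |y_j|$. Thus $|\D\W\mathbf{x}| = |\W\mathbf{x}|$ componentwise, which yields the desired equality $|\W\psi(\mathbf{x})| = |\W\mathbf{x}|$.

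There is essentially no obstacle here; the only point that requires mild care is clarifying what $|\cdot|$ means, namely the entrywise modulus $|\mathbf{z}| = (|z_1|,\dots,|z_N|)^\top$ on $\mathbb{C}^N$, so that the argument is about vectors of nonnegative reals rather than norms. Once this is fixed, the proof is a two-line calculation, and I would simply present it as such, with a brief remark that \eqref{Eq:orth} already does the heavy lifting by encoding the spectral theorem for orthogonal (hence normal) matrices.
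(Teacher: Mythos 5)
Your proof is correct and follows essentially the same route as the paper's: use unitarity of $\W$ to reduce \eqref{Eq:orth} to $\W\psi(\mathbf{x}) = \D\W\mathbf{x}$, then apply $|\lambda_j|=1$ entrywise. The only difference is cosmetic (you phrase the coordinatewise step via components $y_j$ of $\W\mathbf{x}$ where the paper writes inner products $\langle w_j,\mathbf{x}\rangle$ with the rows of $\W$), and your remark about $|\cdot|$ denoting the entrywise modulus is a sensible clarification.
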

\begin{proof}
According to \eqref{Eq:orth} and since $\W \W^\ast \mathbf{x}= \mathbf{x}$, we have 
\begin{equation}
\W\psi (\mathbf{x}) = \D\W \mathbf{x}, 
\end{equation}
which can be written coordinate-wise as  
$$\langle w_j, \psi (\mathbf{x})\rangle = \lambda_j \langle w_j, \mathbf{x} \rangle,\, j = 1,\ldots, N, $$ where $w_j$ denotes the $j-$th row of the complex, unitary matrix $\W$. Hence, since
$|\lambda_j | = 1$
  for all $j$, we have $|\langle w_j, \psi (\mathbf{x})\rangle| = |\langle w_j, \mathbf{x} \rangle |$  for all $\mathbf{x}\in\mathbb{R}^N$ and thus 
  $|\W \mathbf{x}|  = |\W \psi (\mathbf{x})| $  as claimed. 
\end{proof}

In CAE-learning, we can only deal with real weights and are usually interested in learning less than $N$ basis vectors. Hence, we split an $M\times N$-dimensional submatrix of the unitary, complex matrix of eigenvectors $\W$ into a real and an imaginary part $\mathbf{W}_{\text{Re}} \in \mathbb{R}^{M\times N}$ and $\mathbf{W}_{\text{Im}} \in \mathbb{R}^{M\times N}$ which map $\mathbf{x}$ onto the real and imaginary part of $\W \mathbf{x}$, respectively.
The complex data  $\mathbf{W}_{\text{Re}} \mathbf{x} + i\mathbf{W}_{\text{Im}} \mathbf{x}$ is then expressed in its polar form by a phase vector $\boldsymbol{\phi}_{\mathbf{x}} \in [0, 2\pi)^M$ 
and a magnitude vector $\mathbf{r}_{\mathbf{x}} \in \mathbb{R}_{\geq 0}^M$.

According to Proposition~\ref{Prop1}, assuming that $\mathbf{W}_{\text{Re}}+ i\mathbf{W}_{\text{Im}}$ consist of orthonormal eigenvectors of $\psi$ leads to  the magnitude of projections of a signal $\mathbf{x}$ and a transformed version $\psi(\mathbf{x})$ onto these eigenvectors  to be equal.
This property is imposed during training by expressing $\W \mathbf{x}$ and $\W \psi (\mathbf{x})$ in their 
respective polar forms 
\begin{equation}
\boldsymbol{\phi}_{\mathbf{x}} = \text{atan2}(\mathbf{W}_{\text{Re}} \mathbf{x}, \mathbf{W}_{\text{Im}} \mathbf{x}),
\end{equation}
and
\begin{equation}
\mathbf{r}_{\mathbf{x}} = \sqrt{(\mathbf{W}_{\text{Re}} \mathbf{x})^2 + (\mathbf{W}_{\text{Im}} \mathbf{x})^2},
\end{equation}
and swapping the magnitude vectors before reconstruction.

Accordingly, we reconstruct $\mathbf{x}$ given its own phase representation $\boldsymbol{\phi}_{\mathbf{x}}$ and the magnitude representation of the transformed signal $\mathbf{r}_{\psi (\mathbf{x})}$ as follows (see Figure \ref{fig:model}):
\begin{equation}
\mathbf{\widehat{x}}' = \mathbf{W}_{\text{Re}}^\top (\mathbf{r}_{\psi (\mathbf{x})} \cdot \sin{\boldsymbol{\phi}_{\mathbf{x}}}) + \mathbf{W}_{\text{Im}}^\top (\mathbf{r}_{\psi (\mathbf{x})} \cdot \cos{\boldsymbol{\phi}_{\mathbf{x}}}). 
\end{equation}
Likewise, we reconstruct the transformed signal $\psi (\mathbf{x})$ as
\begin{equation}
\widehat{\psi (\mathbf{x})}' = \mathbf{W}_{\text{Re}}^\top (\mathbf{r}_{\mathbf{x}} \cdot \sin{\boldsymbol{\phi}_{\psi (\mathbf{x})}}) + \mathbf{W}_{\text{Im}}^\top (\mathbf{r}_{\mathbf{x}} \cdot \cos{\boldsymbol{\phi}_{\psi (\mathbf{x})}}).
\end{equation}

\begin{table*}
\centering
\footnotesize
\begin{tabular}{llllllllllllll}
\toprule
Algorithm     & $F_{\text{est}}$  & $P_{\text{est}}$  & $R_{\text{est}}$  & $F_{\text{o(.5)}}$   & $P_{\text{o(.5)}}$   & $R_{\text{o(.5)}}$
 & $F_{\text{o(.75)}}$   & $P_{\text{o(.75)}}$   & $R_{\text{o(.75)}}$  & $\mathbf{F_3}$    & $P_3$    & $R_3$    & Time (s) \\
\midrule
\rule{-2pt}{2ex}
CA (ours) & 52.53 & 63.10 & 50.29 & 63.58 & 64.60 & 62.68 & 67.20 & 68.73 & 65.82 & \textbf{52.16} & \textbf{62.51} & 49.78  & \textbf{69} \\ 
GAE intervals \cite{lattner2018learning} & \textbf{57.67} & \textbf{67.46} & 59.52 & 58.85 & 61.89 & 56.54 & 68.44 & 72.62 & 64.86 & 51.61 & 59.60 & \textbf{55.13}  & 194 \\ 
VMO deadpan \cite{wang2015music} & 56.15 & 66.80  & 57.83 & \textbf{67.78} & \textbf{72.93} & \textbf{64.30}  & \textbf{70.58} & \textbf{72.81} & \textbf{68.66} & 50.60  & 61.36 & 52.25 & 96   \\
SIARCT-CFP \cite{collins2013siarct}    & 23.94 & 14.90  & \textbf{60.90}  & 56.87 & 62.90 & 51.90  & -     & -     & -     & -     & -     & -     & -    \\
Nieto  \cite{nieto2014identifying}      & 49.80  & 54.96 & 51.73 & 38.73 & 34.98 & 45.17 & 31.79 & 37.58 & 27.61 & 32.01 & 35.12 & 35.28 & 454 \\
\bottomrule
\end{tabular}
\caption{Different precision, recall and f-scores (adopted from \cite{wang2015music}, details on the metrics are given in \cite{collins2017discovery}) of different methods in the Discovery of Repeated Themes and Sections MIREX task, for symbolic music and audio. The $\mathbf{F_3}$ score constitutes a summarization of all metrics.}
\label{tab:secdiscovery}
\vspace{-2mm}
\end{table*}


The CAE is then trained by minimizing the symmetric reconstruction error %
%
%
\begin{equation}\label{eq:loss}
\frac{1}{N}\sum_i^N (x_i - \widehat{x}_i')^{p} + \frac{1}{N}\sum_j^N (\psi (x_j ) - \widehat{\psi (x_j )}' )^{p},
\end{equation}
where $p \in \{1,2\}$ has shown to work well in practice.
Training on sufficiently many transform pairs thus leads to learning the weights of the unitary matrix $\W$, which diagonalises $\psi$. While the magnitudes of the coefficient vectors $\W \psi (\mathbf{x})$ are equal to $\W \mathbf{x}$, the transformation itself is then represented by the differences in the phase vectors $\Delta\boldsymbol{\phi}=\boldsymbol{\phi}_{\mathbf{x}} - \boldsymbol{\phi}_{\psi (\mathbf{x})}$ (see Figure \ref{fig:pca}(b)).
As an example of complex basis vectors learned by the CAE, see Figure \ref{fig:basis_vectors}, where the CAE was trained on time-shifted audio signals in the time domain, yielding complex Gabor-like filters.
\section{Training}\label{sec:training}
For all the experiments described below, we choose 256 complex basis vectors and train the model for 500 epochs with a learning rate of 1e-3. We use a batch size of 1000, and we sample 100k transformations per epoch, generally picking random instances from the train set to be transformed.
The training data is standardized, and $50\%$ dropout is used on the input.
We set $p=1$ (see Equation \ref{eq:loss}) for the audio experiments, and $p=2$ for the MNIST experiment.
In the alignment experiment, we also penalize the mean of norms of all basis vectors and the deviation of the individual basis vectors' norms to the average norm over all basis vectors.
In the MNIST experiment, the norm of all basis vectors is set to $0.4$ after every batch.
For information about the training data see the respective experiment section below.

\vspace{-1mm}

\section{Experiments}\label{sec:experiments}
\subsection{Discovery of Repeated Themes and Sections}\label{sec:disc-repe-them}
\urldef\mirexurl\url{http://www.music-ir.org/mirex/wiki/2017:Discovery_of_Repeated_Themes_&_Sections}
In the MIREX task ``Discovery of Repeated Themes and Sections'',\footnote{\mirexurl{}} the performance of different algorithms to identify repeated (and possibly transposed) patterns in symbolic music and audio is tested.
The commonly used JKUPDD dataset \cite{collins2017discovery} contains 26 motifs, themes, and repeated sections annotated in 5 pieces by J. S. Bach, L.~v.~Beethoven, F. Chopin, O. Gibbons, and W. A. Mozart.
We use the audio versions of the dataset and preprocess them the same way as the training data described below.

The CAE is trained on 100 random piano pieces of the MAPS dataset \cite{emiya2010multipitch} (subset MUS) at a sampling rate of 22.05 kHz. 
We choose a constant-Q transformed spectrogram representation with a hop size of $1984$.
The range comprises $120$ frequency bins (24 per octave), starting from a minimal frequency of $65.4$ Hz.
The spectrogram is split into n-grams of 32 frames.
The set of transformations applied to the data during training $\Psi_{\text{pshift, tshift}}$ contains transposition by $[-24, 24]$ frequency bins and time shifts by $[-12, 12]$.

After training, all n-grams of the JKUPDD dataset are projected into the transformation-invariant magnitude space.
Using these representations, a self-similarity matrix is built for each piece using the reciprocal of the cosine distance.
The matrices are then filtered with an identity matrix of size $10\times 10$.
Then, their main diagonals are set to zero. 
Finally, the matrices are first normalized and then centered by subtracting their medians.

For finding repeated sections, the method proposed in \cite{lattner2018learning} is adopted, which finds diagonals in a self-similarity matrix using a threshold.
As we normalized the matrices to zero median, the threshold chosen in this experiment is close to zero (i.e., $0.01$).

\subsubsection{Results and Discussion}
Table \ref{tab:secdiscovery} shows the results of the experiment.
Using our method, we could slightly outperform the Gated Autoencoder approach proposed in \cite{lattner2018learning}.
By visual inspection of the self-similarity matrix, we noted very precise diagonals at repetitions, while almost no similarity is indicated on other parts (this is different from the self-similarity plots provided in \cite{lattner2018learning}).
This selectivity, which may also result from the cosine distance, probably contributes to the slightly higher precision of the proposed method.

\subsection{Invariant Audio-to-Score Alignment}
\begin{table}
\begin{center}
\footnotesize
\begin{tabular}{clcc}
\toprule
\textbf{ID} & \textbf{Dataset} & \textbf{Files} & \textbf{Duration}\\ \midrule
\textbf{CE} & Chopin Etude  & 22 &  $\sim$ 30 min.\\ 
\textbf{CB} & Chopin Ballade & 22 & $\sim$ 48 min. \\ 
\textbf{MS} & Mozart Sonatas & 13 &  $\sim$ 85 min.\\ 
\textbf{RP} & Rachmaninoff Prelude & 3 & $\sim$ 12 min. \\ 
\textbf{B3} & Beethoven 3 & 4 & $\sim$ 52 min. \\ 
\textbf{M4} & Mahler 4 & 4 & $\sim$ 58 min. \\ 
\bottomrule
\end{tabular}
\caption{The evaluation data set for the alignment experiments (see text).}
\vspace{-5mm}
\label{tab:data_set_alignment}
\end{center}
\end{table}
The task of synchronising an audio recording of a music performance and its score has already been studied extensively (see e.g. \cite{HuDT03_audiomatching_WASPAA,EllisP07_CoverSong_ICASSP,JoderER10_MusicAlignment_ICASSP, Mueller15_FMP_SPRINGER,dixon:ismir:2005,grachten:ismir:2013}). Here, we compare synchronisation results using the proposed method ({CAE}) to traditional Chroma features and the {GAE} features introduced in \cite{lattner2018learning}, which were used for music alignment in \cite{ArztL18_AudioToScore_ISMIR}.

For the alignment experiments we follow \cite{ArztL18_AudioToScore_ISMIR}, using the same setup and the same data (see Table \ref{tab:data_set_alignment} for a summary). \emph{CB} and \emph{CE} consist of 22 recordings of excerpts of the Ballade Op. 38 No. 1 and the Etude Op. 10 No. 3 by Chopin \cite{vienna4x22}, \emph{MS} contains performances of the first movements of the piano sonatas KV279-284, KV330-333, KV457, KV475 and KV533 by Mozart \cite{widmer:aimag}, and \emph{RP} consists of three performances of the Prelude Op. 23 No. 5 by Rachmaninoff \cite{arzt:2016}. Finally, \emph{B3} and \emph{M4} are annotated recordings of Beethoven's $3^{rd}$ and Mahler's $4^{th}$ symphonies. Note that CB, CE, MS, and RP consist of piano music, while B3 and M4 consist of orchestral music, but we will use the same model for the whole data set, which was trained on piano music only.

The scores are provided in the MIDI format, with the global tempi set such that the scores roughly match the average length of the given performances, i.e., both representations have the same average tempo, but there still exist substantial differences in local tempi. The scores are then synthesized with the help of \emph{timidity}\footnote{\url{https://sourceforge.net/projects/timidity/}} and a publicly available sound font. The resulting audio files are used as score representations for the alignment experiments. To compute the alignments, a multi-scale variant of the dynamic time warping (DTW) algorithm (see \cite{Mueller15_FMP_SPRINGER} for a detailed description of DTW) is used, namely FastDTW\cite{salvador:ida:2007} with the radius parameter set to 50. 

The CAE is trained the same way and on the same data as described in Section \ref{sec:disc-repe-them} but here we choose a CQT hop size of $448$.
Furthermore, for this experiment, we use an n-gram size of 8. 
The set of transformations applied to the data during training $\Psi_{\text{pshift}}$ are transpositions by $[-24, 24]$ frequency bins.

\subsubsection{Results and Discussion}
\begin{table}
    \centering
    \scriptsize
    \begin{tabular}{llcccc}
    \toprule
    & & \multicolumn{3}{c}{\textbf{`Un-transposed' Data}} & {\textbf{Transp.}}  \\ \cmidrule(r){3-5}  \cmidrule(r){6-6}
        \textbf{DS}                  & \textbf{Metric}                & \textbf{Chroma}                                    & \textbf{GAE}  & \textbf{CAE}  & \textbf{CAE}             \\ \midrule
        \multirow{5}{*}{\textbf{CB}} & 1\textsuperscript{st} Quartile           & \SI{15}{\milli\second}   & \textbf{\SI{10}{\milli\second}} & \textbf{\SI{10}{\milli\second}} & \SI{10}{\milli\second}   \\
                            & Median                 & \SI{34}{\milli\second}   & \SI{22}{\milli\second} & \textbf{\SI{21}{\milli\second}} & \SI{21}{\milli\second} \\
                            & 3\textsuperscript{rd} Quartile           & \SI{80}{\milli\second}  & {\SI{39}{\milli\second}}  & \textbf{\SI{37}{\milli\second}} & \SI{38}{\milli\second}  \\
                            & Err. $\leq$ \SI{50}{\milli\second}  & 64\%                               & 83\%   &  \textbf{84\%}   &  84\%             \\
                            & Err. $\leq$ \SI{250}{\milli\second} & 85\%                            & \textbf{94\%}    & \textbf{94\%}   & 94\%             \\ \midrule
        \multirow{5}{*}{\textbf{CE}} & 1\textsuperscript{st} Quartile           & \SI{13}{\milli\second}   & \textbf{\SI{10}{\milli\second}} & \textbf{\SI{10}{\milli\second}} & {\SI{9}{\milli\second}}\\
                            & Median                 & \SI{29}{\milli\second}  & {\SI{21}{\milli\second}} & \textbf{\SI{19}{\milli\second}} & {\SI{18}{\milli\second}}\\
                            & 3\textsuperscript{rd} Quartile           & \SI{56}{\milli\second}  & {\SI{36}{\milli\second}} & \textbf{\SI{32}{\milli\second}} & {\SI{30}{\milli\second}} \\
                            & Err. $\leq$ \SI{50}{\milli\second}  & 71\%                                & {87\%}   & \textbf{90\%}  & {91\%}               \\
                            & Err. $\leq$ \SI{250}{\milli\second} & 94\%                                  & \textbf{96\%}      & \textbf{96\%} & {97\%}             \\ \midrule
        \multirow{5}{*}{\textbf{MS}} & 1\textsuperscript{st} Quartile           & \SI{7}{\milli\second}   & \textbf{\SI{6}{\milli\second}} & \textbf{\SI{6}{\milli\second}} & {\SI{6}{\milli\second}} \\
                            & Median                 & \SI{16}{\milli\second}  & {\SI{13}{\milli\second}} & \textbf{\SI{12}{\milli\second}} & {\SI{12}{\milli\second}}\\
                            & 3\textsuperscript{rd} Quartile           & \SI{31}{\milli\second}   & {\SI{25}{\milli\second}} & \textbf{\SI{22}{\milli\second}} & {\SI{22}{\milli\second}} \\
                            & Err. $\leq$ \SI{50}{\milli\second} & 85\%                                  & {90\%}     & \textbf{91\%}     & {92\%}            \\
                            & Err. $\leq$ \SI{250}{\milli\second} & 98\%                                & \textbf{100\%}      & \textbf{100\%}    & {99\%}      \\ \midrule
        \multirow{5}{*}{\textbf{RP}} & 1\textsuperscript{st} Quartile           & \SI{17}{\milli\second}   & {\SI{14}{\milli\second}} & \textbf{\SI{9}{\milli\second}} & {\SI{9}{\milli\second}} \\
                            & Median                 & \SI{43}{\milli\second}   & {\SI{34}{\milli\second}} & \textbf{\SI{20}{\milli\second}} & {\SI{21}{\milli\second}}  \\
                            & 3\textsuperscript{rd} Quartile           & \SI{113}{\milli\second} & {\SI{90}{\milli\second}} & \textbf{\SI{55}{\milli\second}} & {\SI{69}{\milli\second}} \\
                            & Err. $\leq$ \SI{50}{\milli\second}  & 55\%                                 & {63\%}       & \textbf{74\%}    & {70\%}       \\
                            & Err. $\leq$ \SI{250}{\milli\second} & {91\%}                                & 90\%    & \textbf{95\%}    & 93\%            \\  \midrule
        \multirow{5}{*}{\textbf{B3}} & 1\textsuperscript{st} Quartile           & {\SI{20}{\milli\second}}   & \SI{25}{\milli\second} & \textbf{\SI{17}{\milli\second}} & {\SI{18}{\milli\second}} \\
                            & Median                 & {\SI{48}{\milli\second}}  &\SI{54}{\milli\second}  & \textbf{\SI{39}{\milli\second}} & {\SI{42}{\milli\second}} \\
                            & 3\textsuperscript{rd} Quartile           & {\SI{108}{\milli\second}} &{\SI{104}{\milli\second}} & \textbf{\SI{83}{\milli\second}} & {\SI{99}{\milli\second}} \\
                            & Err. $\leq$ \SI{50}{\milli\second}   & {52\%}                &    47\%         & \textbf{59\%}   &    56\%                 \\
                            & Err. $\leq$ \SI{250}{\milli\second} & {88\%}             &      {90\%}            & \textbf{91\%}     &    88\%               \\ \midrule
        \multirow{5}{*}{\textbf{M4}} & 1\textsuperscript{st} Quartile           & {\SI{46}{\milli\second}}  &  \SI{50}{\milli\second}  & \textbf{\SI{42}{\milli\second}}  & {\SI{46}{\milli\second}} \\
                            & Median                 & {\SI{110}{\milli\second}} &  \SI{129}{\milli\second}  & \textbf{\SI{99}{\milli\second}} & {\SI{110}{\milli\second}} \\
                            & 3\textsuperscript{rd} Quartile           & {\SI{278}{\milli\second}} &  \SI{477}{\milli\second}  & \textbf{\SI{255}{\milli\second}} & {\SI{290}{\milli\second}} \\
                            & Err. $\leq$ \SI{50}{\milli\second}   & {27\%}                     &     25\%       & \textbf{29\%}   & {27\%}               \\
                            & Err. $\leq$ \SI{250}{\milli\second} & {73\%}                  &      66\%      & \textbf{75\%}     & {72\%}             \\ 
                            \bottomrule
    \end{tabular}
    \caption{Comparison of the proposed features {CAE} to {Chroma} features and features computed via a gated autoencoder {GAE}. The first three columns show results on normal, i.e., un-transposed data. The rightmost column shows the average result of alignments of the original performances to scores in 12 different transpositions.}
        \label{tab:alignment_1}
\vspace{-2mm}
\end{table}

In the alignment experiments, we compare the proposed {CAE} features to the results presented in \cite{ArztL18_AudioToScore_ISMIR}, where {Chroma} features and features computed via a gated autoencoder (GAE) were compared to each other. Table \ref{tab:alignment_1} gives an overview of the results. The first three columns show that the proposed {CAE} features consistently outperform the other two methods in the normal alignment setting (i.e., without any transpositions). Additionally, the rightmost column shows that for {CAE}, the results essentially stay the same, even when the alignment is computed with transposed versions of the score. This demonstrates the invariance to transpositions, which is a serious advantage over the Chroma features.

As has been shown in \cite{ArztL18_AudioToScore_ISMIR}, the GAE features are highly sensitive to tempo differences between the score representation and the performance. To see if the proposed {CAE} features suffer from the same problem, we repeated this experiment and performed alignments on artificially slowed-down and sped-up score representations. The results are shown in Table \ref{tab:alignment_2}. For all tested features, the degree to which the tempi of the score representation and the performance match influences the alignment quality. The experiments suggest that {CAE} is less sensitive to differences in tempi than {GAE}, but the Chroma features still have the advantage over GAE in this matter. We also conducted experiments with more extreme tempi, which further confirmed this trend. The reason for the higher robustness to tempo differences of the CAE features over the GAE features may be found in the way the GAE features are computed. In a GAE, two inputs $\{\mathbf{x}_{t-n, \dots, t}, \mathbf{x}_{t+1}\}$ are compared to one another, and the features are sensitive to the position and order of events in $\mathbf{x}_{t-n, \dots, t}$. When training a CAE only for transposition-invariance, the resulting features represent mainly distances in the frequency-dimension of the input and tend to be invariant to the position of events in time.

\begin{table*}[]
    \centering
    \scriptsize
    \begin{tabular}{llccccccccc}
    \toprule
    & &  \multicolumn{3}{c}{\textbf{Chroma}} & \multicolumn{3}{c}{\textbf{GAE}} & \multicolumn{3}{c}{\textbf{CAE}}  \\ \cmidrule(r){3-5} \cmidrule(r){6-8} \cmidrule(r){9-11}
        \textbf{DS}                  & \textbf{Metric}                & \textbf{$\mathbf{\frac{2}{3}}$  T.}                & \textbf{Base T.}                    &\textbf{$\mathbf{\frac{4}{3}}$  T.}       & \textbf{$\mathbf{\frac{2}{3}}$  T.}                   & \textbf{Base T.}                    &\textbf{$\mathbf{\frac{4}{3}}$  T.} &  \textbf{$\mathbf{\frac{2}{3}}$  T.}                   & \textbf{Base T.}                    &\textbf{$\mathbf{\frac{4}{3}}$  T.}             \\ \midrule
        \multirow{2}{*}{\textbf{CB}}
                            & Error $\leq$ \SI{50}{\milli\second}  & 54\%  & 64\%  & 67\%  & 47\%  & 83\%  & 33\%  & \textbf{80\%}  & \textbf{84\%}  & \textbf{85\%}                           \\
                            & Error $\leq$ \SI{250}{\milli\second}   & 82\%  & 85\%  & 85\%  & 87\%  & \textbf{94\%}  & 84\% & \textbf{91\%} & \textbf{94\%} & \textbf{94\%}                      \\ \midrule
        \multirow{2}{*}{\textbf{CE}}
                            & Error $\leq$ \SI{50}{\milli\second} & 69\%  & 71\%  & 73\%  & 40\%  & 87\%  & 38\% & \textbf{85\%} & \textbf{90\%} & \textbf{88\%}                               \\
                            & Error $\leq$ \SI{250}{\milli\second} & 90\%  & 94\%  & 94\%  & \textbf{93\%}  & \textbf{96\%}  & 80\%  & \textbf{93\%} & \textbf{96\%} & \textbf{95\%}                        \\ \midrule
        \multirow{2}{*}{\textbf{MS}}
                            & Error $\leq$ \SI{50}{\milli\second} & 79\%  & 85\%  & 75\%  & 84\%  & 90\%  & 74\% & \textbf{86\%}  & \textbf{91\%}  & \textbf{76\%}                              \\
                            & Error $\leq$ \SI{250}{\milli\second} & 98\%  & 98\%  & 97\%  & \textbf{99\%}  & \textbf{100\%}  & \textbf{98\%} & \textbf{99\%}  & \textbf{100\%}  & \textbf{98\%}                        \\ \midrule
        \multirow{2}{*}{\textbf{RP}} 
                            & Error $\leq$ \SI{50}{\milli\second}  & 53\%  & 55\%  & 56\%  & 43\%  & 63\%  & 37\%  & \textbf{67\%}  & \textbf{74\%}  & \textbf{63\%}                          \\
                            & Error $\leq$ \SI{250}{\milli\second} & 92\%  & 91\%  & 87\%  & 82\%  & 90\%  & 85\%  & \textbf{95\%}  & \textbf{95\%}  & \textbf{91\%}                        \\  \midrule
        \multirow{2}{*}{\textbf{B3}} 
                            & Error $\leq$ \SI{50}{\milli\second}  &\textbf{44\%}  & {52\%}               & \textbf{36\%} & -- &    47\%      & -- &   {39\%}  & \textbf{59\%}  &     {33\%}             \\
                            & Error $\leq$ \SI{250}{\milli\second}& \textbf{83\%}  & {88\%}           & \textbf{82\%}& -- &      {90\%}       &--&   {82\%}   & \textbf{91\%}    &  \textbf{82\%}           \\ \midrule
        \multirow{2}{*}{\textbf{M4}} 
                            & Error $\leq$ \SI{50}{\milli\second}   &\textbf{26\%}& {27\%}                   &\textbf{24\%}&-- &     25\%      &--&  {24\%} & \textbf{29\%}   &  {22\%}              \\
                            & Error $\leq$ \SI{250}{\milli\second}& \textbf{75\%}& {73\%}                  & \textbf{71\%}&--&      66\%      &--&  {72\%}& \textbf{75\%}     &    {65\%}         \\ 
                            \bottomrule
    \end{tabular}
    \caption{Results on score representations with different tempi (higher is better). \emph{Base T.} refers to a globally set tempo that ensures that the duration of the score representation is roughly equal to the duration of a typical performance. \emph{$\mathbf{\frac{2}{3}}$  T.}  and \emph{$\mathbf{\frac{4}{3}}$  T.} refer to score representation with the tempo set to $\mathbf{\frac{2}{3}}$ and $\mathbf{\frac{4}{3}}$ of the base tempo. The two metrics used are the percentage of events that are aligned with an error lower or equal \SI{50}{\milli\second} and \SI{250}{\milli\second} (i.e. higher is better). The missing numbers for {GAE} were not provided in \cite{ArztL18_AudioToScore_ISMIR}.}
        \label{tab:alignment_2}
\vspace{-4mm}
\end{table*}

\subsection{Classification of MNIST digits}
We test the ability of the CAE to learn rotation-invariance in 2D images using randomly rotated MNIST digits (the dataset was first described in \cite{LarochelleECBB07}).
Given the set of rotations $\Psi_\text{rot}$ with rotation angles $[0, 2\pi)$ about the origin of the images.
For any MNIST instance $\mathbf{x}_k$, we create a rotated version $\psi_i(\mathbf{x}_k)$ and a further rotated version $\psi_j(\psi_i(\mathbf{x}_k))$, where $\psi_i, \psi_j \in \Psi_\text{rot}$, resulting in pairs $\{\psi_i(\mathbf{x}_k), \psi_j(\psi_i(\mathbf{x}_k))\}$.
After the CAE is trained on $50$k such pairs, single randomly rotated instances are projected into the magnitude space.
On these projections, a logistic regression classifier is trained to predict the class labels.
We test different train set sizes (sampled from the main train set with balanced class distribution).
50-fold cross-validation is used, where evaluation is always performed on 10k test instances, independent of the train set size.
For comparison, we perform k-nn classification on the randomly rotated images (i.e., the input space), and unrotated images directly.
We choose logistic regression for the magnitude space and k-nn classification for the input space because they showed the overall best results for those representations.
This choice, as well as the overall experiment setup,  reflects that in \cite{memisevic2013aperture}.

\subsubsection{Results and Discussion}
\begin{figure}
\centering
\includegraphics[width=0.8\linewidth]{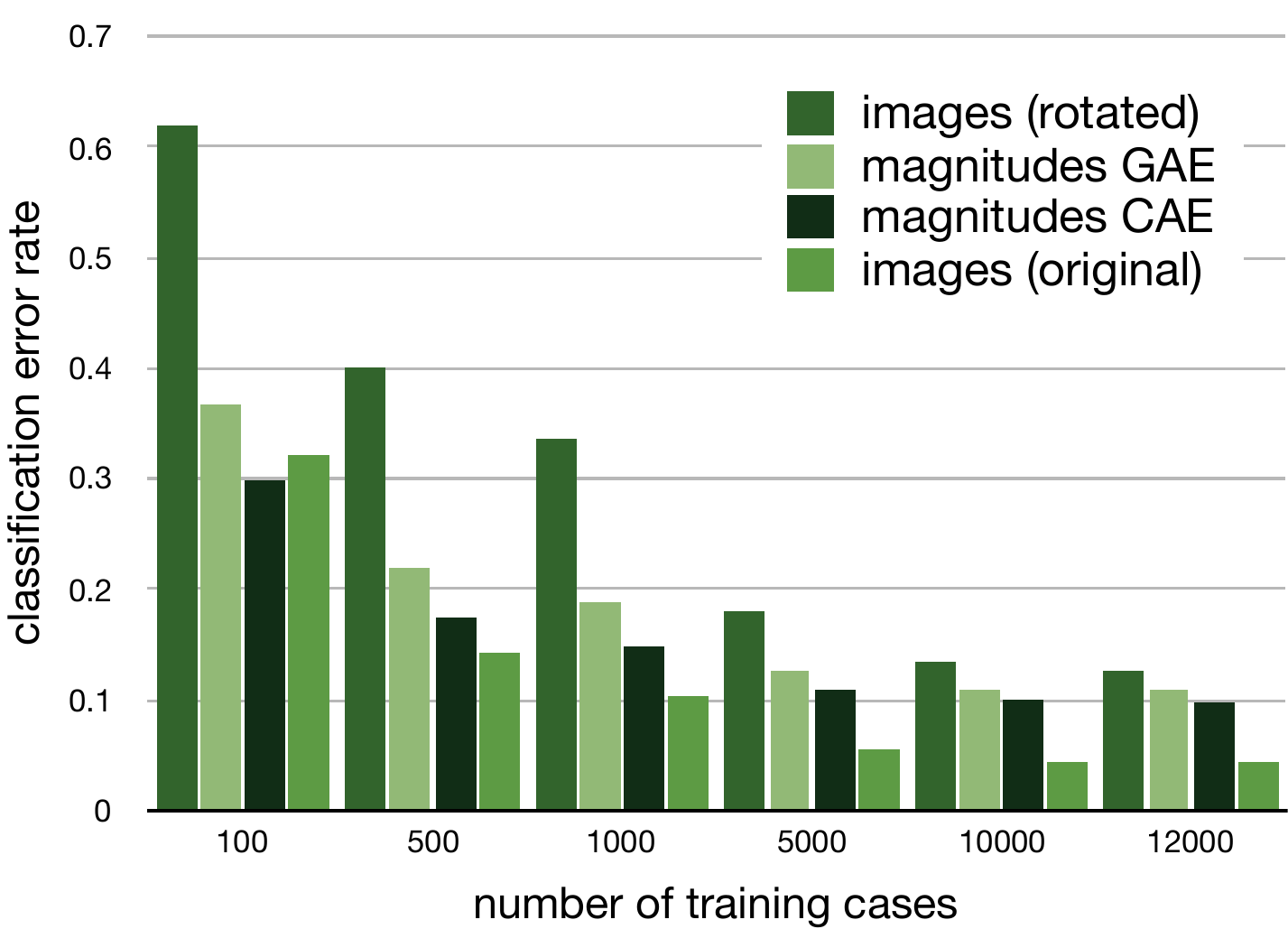} \\
\caption{Classification error rates in the input space (images) and the magnitude space (magnitudes) on the rotated MNIST dataset with different train set sizes. ``Images (original)'' denotes the results of the unrotated MNIST dataset for comparison.}
\label{fig:mnist_rot_chart}
\vspace{-.5mm}
\end{figure}

\begin{figure}
\begin{tabular}{cc}
\hspace{-4mm}
\includegraphics[trim=30mm 23mm 38mm 40mm, clip, width=.50\linewidth]{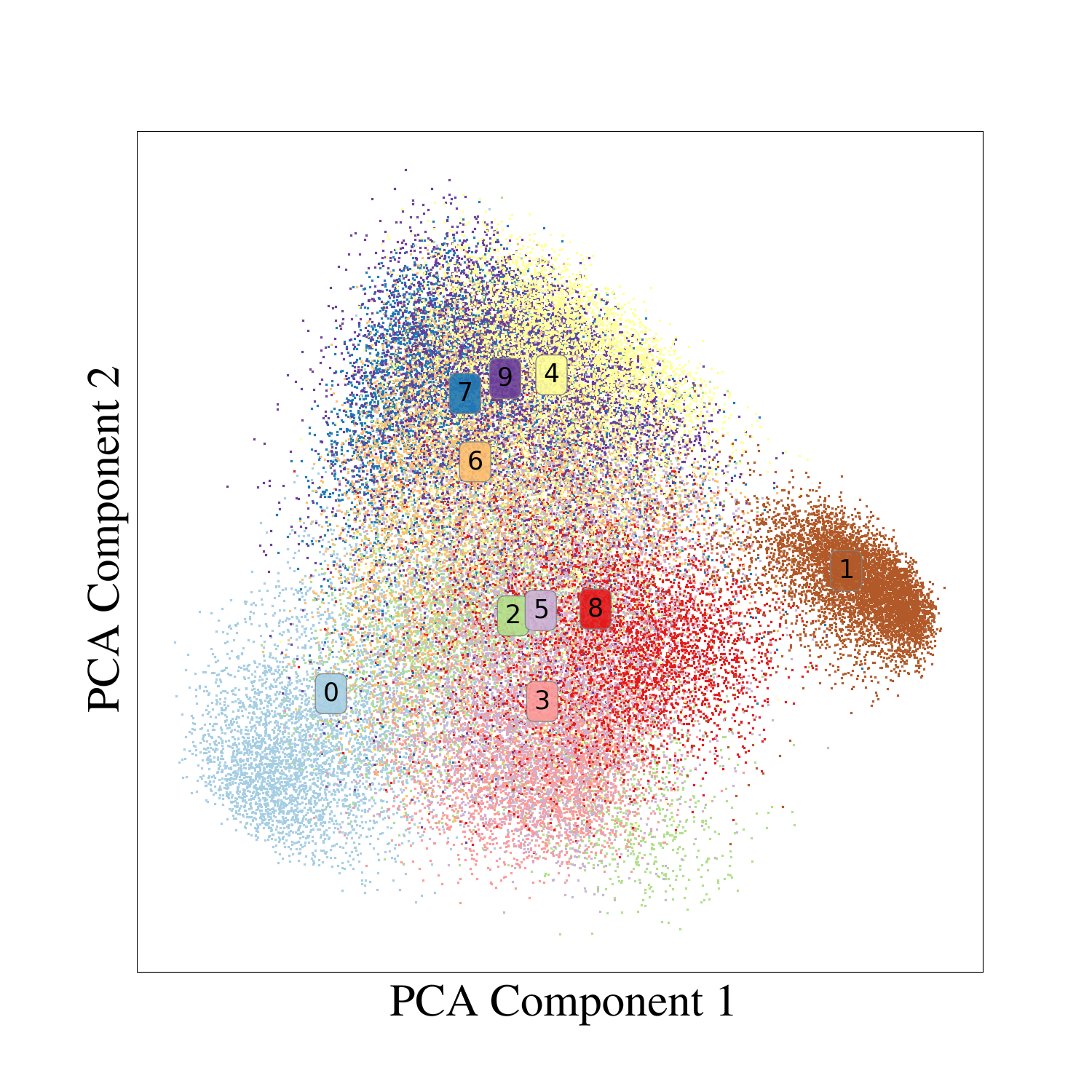} & \hspace{-4mm}
\includegraphics[trim=30mm 23mm 38mm 40mm, clip, width=.50\linewidth]{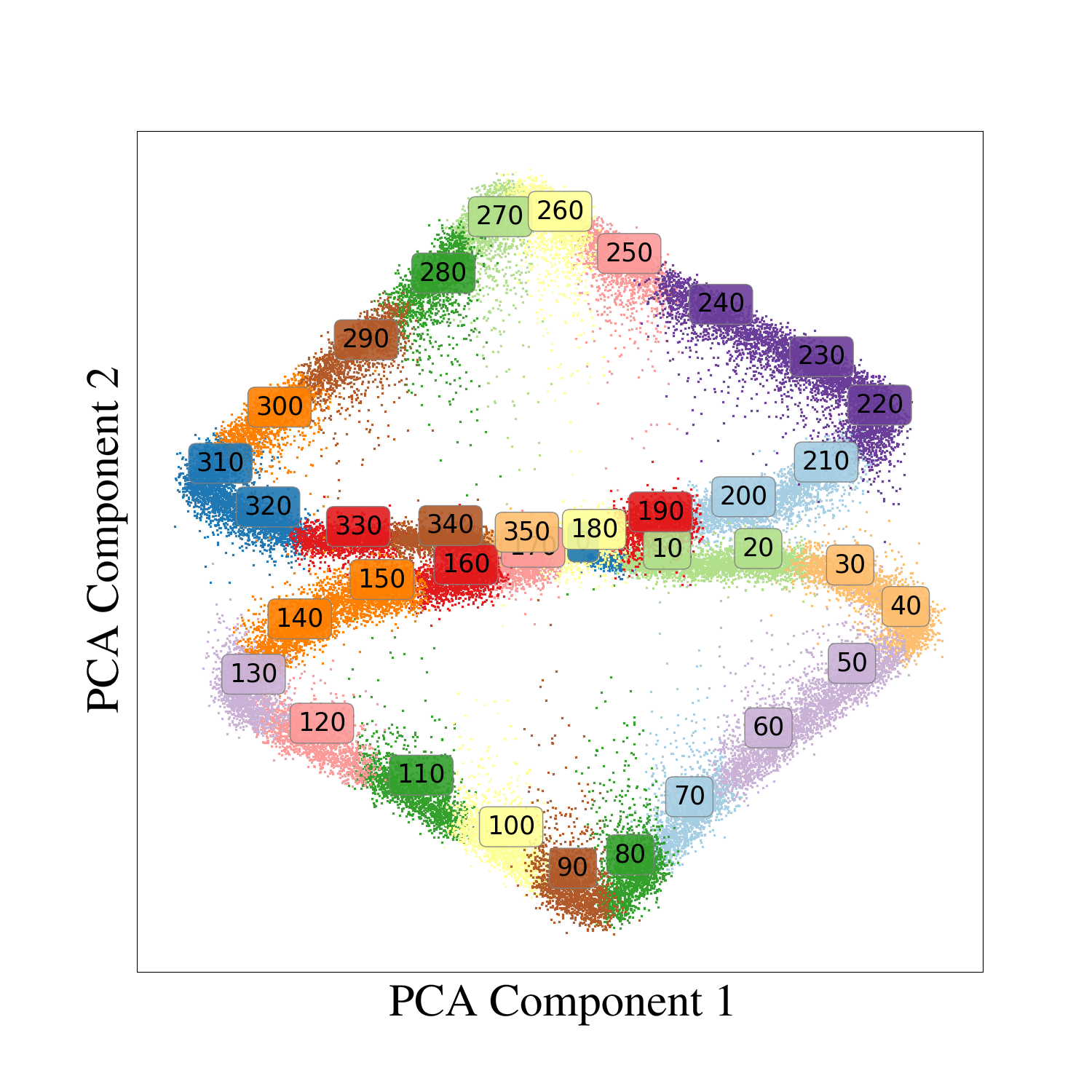} \\
(a) & (b) \\
\end{tabular}
\vspace{-3mm}
\caption{PCAs of rotated MNIST digits in the magnitude space (a) and the phase-difference space (b) (best viewed in color). The magnitude space represents the data in the absence of the transformations leading to clusters of the digit classes (colored and labeled accordingly). The phase-difference space represents the transformations between images, independent of their identity (colors and labels denote rotation angles quantized into 36 bins).}
\label{fig:pca}
\vspace{-.3cm}
\end{figure}

Figure \ref{fig:mnist_rot_chart} shows the results of the rotated MNIST classification task.
The error rates of \emph{magnitudes GAE} were obtained by a GAE architecture which was extended to learn basis functions, as reported in \cite{memisevic2013aperture}.
Classification on the magnitude space of the CAE (\emph{magnitudes CAE}) leads to substantially better results than those of the GAE, even though only $256$ basis elements are used in the CA, compared to $1000$ in the GAE.
This is probably due to the explicit training of the CA to learn an invariant magnitude space, while the magnitude space of the GAE is learned indirectly during the learning of transformations.
Overall, classification on the rotation-invariant magnitude spaces performs much better than on the input space of rotated images in particular for small train set sizes.
The difference in performance between \emph{images (original)} and \emph{magnitudes CA} reflects the gap between a theoretically optimal rotation-invariant representation (as \emph{images (original)} are not rotated), and the representations learned by the CA.
On 100 training cases, logistic regression would outperform the k-nn classification on the input spaces, while for all other train set sizes k-nn is superior over logistic regression.
Thus, we obtain slightly worse classification performance of \emph{images (original)} on 100 training cases compared to \emph{magnitudes CA}.

Figure \ref{fig:pca} shows PCAs of the randomly rotated MNIST digits projected into the magnitude space and the phase-difference space of the CAE.
The clusters in the magnitude space indicate that images with the same content (i.e., class label) yield similar projections, independent of their rotations.
The clusters in the phase-difference space show that phase differences clearly represent the transformations in the data.

\vspace{-3mm}

\section{Conclusion and future work}\label{sec:conclusion}
The empirical results in this work show that for music alignment, structure analysis, and invariant classification tasks, the features learned by the CAE have advantages over other features, like Chroma features, and features learned by a GAE.
As opposed to Chroma features, the CAE features are transposition-invariant, and generally perform better in the alignment task.
Compared to the features learned by a GAE, the CAE features are more robust to differences in tempo between alignment data.

Future work should involve investigating the use of the ``phase-difference'' space of the CAE.
For example, qualifying transformations between sections in music could lead to a richer musical structure analysis (e.g., determining mutually transposed parts, or finding sections with similar rhythm but different tonality).

The learned bases could also be used in scattering transforms (i.e., as convolutional filters).
As opposed to conventional scattering transforms, where the bases are fixed in general, learned bases may help reducing model sizes or to cover different invariances.
Using rotation-invariant filters in convolutional settings may lead to rotation-invariant architectures, similar to what is proposed in \cite{DBLP:conf/cvpr/WorrallGTB17}.

\section{Acknowledgments}
This project has received funding from the European Union's Horizon 2020 research and innovation programme under the Marie Skłodowsa-Curie grant agreement No. 765068.
Monika D{\"o}rfler is supported by the Vienna Science and Technology Fund (WWTF) project SALSA (MA14-018).





\small
\bibliography{bib_sl,bib_aa}

\begin{thebibliography}{10}

\bibitem{arzt:2016}
Andreas Arzt.
\newblock {\em Flexible and Robust Music Tracking}.
\newblock PhD thesis, Johannes Kepler University Linz, 2016.

\bibitem{ArztL18_AudioToScore_ISMIR}
Andreas Arzt and Stefan Lattner.
\newblock Audio-to-score alignment using transposition-invariant features.
\newblock In {\em Proceedings of the International Conference on Music
  Information Retrieval ({ISMIR})}, Paris, France, 2018.

\bibitem{DBLP:conf/ismir/Bertin-MahieuxE12}
Thierry Bertin{-}Mahieux and Daniel P.~W. Ellis.
\newblock Large-scale cover song recognition using the 2d fourier transform
  magnitude.
\newblock In Fabien Gouyon, Perfecto Herrera, Luis~Gustavo Martins, and Meinard
  M{\"{u}}ller, editors, {\em Proceedings of the 13th International Society for
  Music Information Retrieval Conference, {ISMIR} 2012, Mosteiro S.Bento Da
  Vit{\'{o}}ria, Porto, Portugal, October 8-12, 2012}, pages 241--246. {FEUP}
  Edi{\c{c}}{\~{o}}es, 2012.

\bibitem{DBLP:conf/icml/BoureauPL10}
Y{-}Lan Boureau, Jean Ponce, and Yann LeCun.
\newblock A theoretical analysis of feature pooling in visual recognition.
\newblock In Johannes F{\"{u}}rnkranz and Thorsten Joachims, editors, {\em
  Proceedings of the 27th International Conference on Machine Learning
  (ICML-10), June 21-24, 2010, Haifa, Israel}, pages 111--118. Omnipress, 2010.

\bibitem{DBLP:journals/pami/BrunaM13}
Joan Bruna and St{\'{e}}phane Mallat.
\newblock Invariant scattering convolution networks.
\newblock {\em {IEEE} Trans. Pattern Anal. Mach. Intell.}, 35(8):1872--1886,
  2013.

\bibitem{collins2017discovery}
Tom Collins.
\newblock Discovery of repeated themes and sections.
\newblock {\em \discovery}, 2017.

\bibitem{collins2013siarct}
Tom Collins, Andreas Arzt, Sebastian Flossmann, and Gerhard Widmer.
\newblock Siarct-cfp: Improving precision and the discovery of inexact musical
  patterns in point-set representations.
\newblock In {\em ISMIR}, pages 549--554, 2013.

\bibitem{DBLP:conf/iccv/DaiQXLZHW17}
Jifeng Dai, Haozhi Qi, Yuwen Xiong, Yi~Li, Guodong Zhang, Han Hu, and Yichen
  Wei.
\newblock Deformable convolutional networks.
\newblock In {\em {IEEE} International Conference on Computer Vision, {ICCV}
  2017, Venice, Italy, October 22-29, 2017}, pages 764--773. {IEEE} Computer
  Society, 2017.

\bibitem{Diele14}
S.~{Dieleman} and B.~{Schrauwen}.
\newblock End-to-end learning for music audio.
\newblock In {\em 2014 IEEE International Conference on Acoustics, Speech and
  Signal Processing (ICASSP)}, pages 6964--6968, May 2014.

\bibitem{dixon:ismir:2005}
Simon Dixon and Gerhard Widmer.
\newblock {MATCH}: A music alignment tool chest.
\newblock In {\em Proceedings of the International Society for Music
  Information Retrieval Conference ({ISMIR})}, pages 492--497, London, UK,
  2005.

\bibitem{EllisP07_CoverSong_ICASSP}
Daniel~P.W. Ellis and Graham~E. Poliner.
\newblock Identifying `cover songs' with chroma features and dynamic
  programming beat tracking.
\newblock In {\em Proceedings of the {IEEE} International Conference on
  Acoustics, Speech, and Signal Processing ({ICASSP})}, volume~4, pages
  1429--1432, Honolulu, Hawaii, USA, 2007.

\bibitem{emiya2010multipitch}
Valentin Emiya, Roland Badeau, and Bertrand David.
\newblock Multipitch estimation of piano sounds using a new probabilistic
  spectral smoothness principle.
\newblock {\em IEEE Transactions on Audio, Speech, and Language Processing},
  18(6):1643--1654, 2010.

\bibitem{vienna4x22}
Werner Goebl.
\newblock The vienna 4x22 piano corpus, 1999.
\newblock \url{http://dx.doi.org/10.21939/4X22}.

\bibitem{grachten:ismir:2013}
Maarten Grachten, Martin Gasser, Andreas Arzt, and Gerhard Widmer.
\newblock Automatic alignment of music performances with structural
  differences.
\newblock In {\em Proceedings of the International Society for Music
  Information Retrieval Conference (ISMIR)}, pages 607--612, Curitiba, Brazil,
  2013.

\bibitem{HuDT03_audiomatching_WASPAA}
Ning Hu, Roger~B. Dannenberg, and George Tzanetakis.
\newblock Polyphonic audio matching and alignment for music retrieval.
\newblock In {\em Proceedings of the {IEEE} Workshop on Applications of Signal
  Processing to Audio and Acoustics ({WASPAA})}, New Paltz, NY, USA, 2003.

\bibitem{jaderberg2015spatial}
Max Jaderberg, Karen Simonyan, Andrew Zisserman, and Koray Kavukcuoglu.
\newblock Spatial transformer networks.
\newblock In Corinna Cortes, Neil~D. Lawrence, Daniel~D. Lee, Masashi Sugiyama,
  and Roman Garnett, editors, {\em Advances in Neural Information Processing
  Systems 28: Annual Conference on Neural Information Processing Systems 2015,
  December 7-12, 2015, Montreal, Quebec, Canada}, pages 2017--2025, 2015.

\bibitem{JoderER10_MusicAlignment_ICASSP}
Cyril Joder, Slim Essid, and Ga{\"e}l Richard.
\newblock A comparative study of tonal acoustic features for a symbolic level
  music-to-score alignment.
\newblock In {\em Proceedings of the {IEEE} International Conference on
  Acoustics, Speech, and Signal Processing ({ICASSP})}, Dallas, Texas, USA,
  2010.

\bibitem{TiPooling}
Dmitry Laptev, Nikolay Savinov, Joachim~M. Buhmann, and Marc Pollefeys.
\newblock {TI-POOLING:} transformation-invariant pooling for feature learning
  in convolutional neural networks.
\newblock In {\em 2016 {IEEE} Conference on Computer Vision and Pattern
  Recognition, {CVPR} 2016, Las Vegas, NV, USA, June 27-30, 2016}, pages
  289--297. {IEEE} Computer Society, 2016.

\bibitem{LarochelleECBB07}
Hugo Larochelle, Dumitru Erhan, Aaron~C. Courville, James Bergstra, and Yoshua
  Bengio.
\newblock An empirical evaluation of deep architectures on problems with many
  factors of variation.
\newblock In Zoubin Ghahramani, editor, {\em Machine Learning, Proceedings of
  the Twenty-Fourth International Conference {(ICML} 2007), Corvallis, Oregon,
  USA, June 20-24, 2007}, volume 227 of {\em {ACM} International Conference
  Proceeding Series}, pages 473--480. {ACM}, 2007.

\bibitem{lattner2018learning}
Stefan Lattner, Maarten Grachten, and Gerhard Widmer.
\newblock Learning transposition-invariant interval features from symbolic
  music and audio.
\newblock In {\em Proceedings of the 19th International Society for Music
  Information Retrieval Conference, {ISMIR} 2018, {Paris, France, September}
  23-27}, 2018.

\bibitem{DBLP:conf/cvpr/LeZYN11}
Quoc~V. Le, Will~Y. Zou, Serena~Y. Yeung, and Andrew~Y. Ng.
\newblock Learning hierarchical invariant spatio-temporal features for action
  recognition with independent subspace analysis.
\newblock In {\em The 24th {IEEE} Conference on Computer Vision and Pattern
  Recognition, {CVPR} 2011, Colorado Springs, CO, USA, 20-25 June 2011}, pages
  3361--3368. {IEEE} Computer Society, 2011.

\bibitem{GaborConvNetworks}
Shangzhen Luan, Chen Chen, Baochang Zhang, Jungong Han, and Jianzhuang Liu.
\newblock Gabor convolutional networks.
\newblock {\em {IEEE} Trans. Image Processing}, 27(9):4357--4366, 2018.

\bibitem{marolt2008mid}
Matija Marolt.
\newblock A mid-level representation for melody-based retrieval in audio
  collections.
\newblock {\em IEEE Transactions on Multimedia}, 10(8):1617--1625, 2008.

\bibitem{DBLP:conf/iros/MatsuoFS17}
Tadashi Matsuo, Hiroya Fukuhara, and Nobutaka Shimada.
\newblock Transform invariant auto-encoder.
\newblock In {\em 2017 {IEEE/RSJ} International Conference on Intelligent
  Robots and Systems, {IROS} 2017, Vancouver, BC, Canada, September 24-28,
  2017}, pages 2359--2364. {IEEE}, 2017.

\bibitem{memisevic2013aperture}
Roland Memisevic and Georgios Exarchakis.
\newblock Learning invariant features by harnessing the aperture problem.
\newblock In {\em Proceedings of the 30th International Conference on Machine
  Learning, {ICML} 2013, Atlanta, GA, USA, 16-21 June 2013}, volume~28 of {\em
  {JMLR} Workshop and Conference Proceedings}, pages 100--108. JMLR.org, 2013.

\bibitem{Mueller15_FMP_SPRINGER}
Meinard M{\"u}ller.
\newblock {\em Fundamentals of Music Processing}.
\newblock Springer Verlag, 2015.

\bibitem{nieto2014identifying}
Oriol Nieto and Morwaread~M Farbood.
\newblock Identifying polyphonic patterns from audio recordings using music
  segmentation techniques.
\newblock In {\em Proc. of the 15th International Society for Music Information
  Retrieval Conference}, pages 411--416, 2014.

\bibitem{olshausen2007bilinear}
Bruno~A Olshausen, Charles Cadieu, Jack Culpepper, and David~K Warland.
\newblock Bilinear models of natural images.
\newblock In {\em Electronic Imaging 2007}, pages 649206--649206. International
  Society for Optics and Photonics, 2007.

\bibitem{DBLP:conf/iccv/OuyangW13}
Wanli Ouyang and Xiaogang Wang.
\newblock Joint deep learning for pedestrian detection.
\newblock In {\em {IEEE} International Conference on Computer Vision, {ICCV}
  2013, Sydney, Australia, December 1-8, 2013}, pages 2056--2063. {IEEE}
  Computer Society, 2013.

\bibitem{Pons18}
Jordi Pons, Oriol Nieto, Matthew Prockup, Erik~M. Schmidt, Andreas~F. Ehmann,
  and Xavier Serra.
\newblock End-to-end learning for music audio tagging at scale.
\newblock In {\em Proceedings of the 19th International Society for Music
  Information Retrieval Conference, {ISMIR} 2018, Paris, France, September
  23-27, 2018}, pages 637--644, 2018.

\bibitem{salvador:ida:2007}
Stan Salvador and Philip Chan.
\newblock {FastDTW}: Toward accurate dynamic time warping in linear time and
  space.
\newblock {\em Intelligent Data Analysis}, 11(5):561--580, 2007.

\bibitem{DBLP:conf/cvpr/SifreM13}
Laurent Sifre and St{\'{e}}phane Mallat.
\newblock Rotation, scaling and deformation invariant scattering for texture
  discrimination.
\newblock In {\em 2013 {IEEE} Conference on Computer Vision and Pattern
  Recognition, Portland, OR, USA, June 23-28, 2013}, pages 1233--1240. {IEEE}
  Computer Society, 2013.

\bibitem{walters2012intervalgram}
Thomas~C Walters, David~A Ross, and Richard~F Lyon.
\newblock The intervalgram: an audio feature for large-scale melody
  recognition.
\newblock In {\em Proc. of the 9th International Symposium on Computer Music
  Modeling and Retrieval (CMMR)}. Citeseer, 2012.

\bibitem{wang2015music}
Cheng{-}i Wang, Jennifer Hsu, and Shlomo Dubnov.
\newblock Music pattern discovery with variable markov oracle: {A} unified
  approach to symbolic and audio representations.
\newblock In Meinard M{\"{u}}ller and Frans Wiering, editors, {\em Proceedings
  of the 16th International Society for Music Information Retrieval Conference,
  {ISMIR} 2015, M{\'{a}}laga, Spain, October 26-30, 2015}, pages 176--182,
  2015.

\bibitem{widmer:aimag}
Gerhard Widmer.
\newblock Discovering simple rules in complex data: A meta-learning algorithm
  and some surprising musical discoveries.
\newblock {\em Artificial Intelligence}, 146(2):129--148, 2003.

\bibitem{DBLP:conf/cvpr/WorrallGTB17}
Daniel~E. Worrall, Stephan~J. Garbin, Daniyar Turmukhambetov, and Gabriel~J.
  Brostow.
\newblock Harmonic networks: Deep translation and rotation equivariance.
\newblock In {\em 2017 {IEEE} Conference on Computer Vision and Pattern
  Recognition, {CVPR} 2017, Honolulu, HI, USA, July 21-26, 2017}, pages
  7168--7177. {IEEE} Computer Society, 2017.

\bibitem{DBLP:conf/icdar/ZhongJX15}
Zhuoyao Zhong, Lianwen Jin, and Zecheng Xie.
\newblock High performance offline handwritten chinese character recognition
  using googlenet and directional feature maps.
\newblock In {\em 13th International Conference on Document Analysis and
  Recognition, {ICDAR} 2015, Nancy, France, August 23-26, 2015}, pages
  846--850. {IEEE} Computer Society, 2015.

\bibitem{DBLP:conf/cvpr/ZhouYQJ17}
Yanzhao Zhou, Qixiang Ye, Qiang Qiu, and Jianbin Jiao.
\newblock Oriented response networks.
\newblock In {\em 2017 {IEEE} Conference on Computer Vision and Pattern
  Recognition, {CVPR} 2017, Honolulu, HI, USA, July 21-26, 2017}, pages
  4961--4970. {IEEE} Computer Society, 2017.

\end{thebibliography}

\end{document}